\documentclass[letterpaper, 10 pt, conference]{ieeeconf}  
\IEEEoverridecommandlockouts                              

\usepackage{xcolor} 
\usepackage{amssymb} 
\usepackage{amsmath} 
\usepackage{bigints} 
\usepackage{braket} 
\usepackage{physics} 
\usepackage{graphicx} 
\usepackage{url}
\newtheorem{theorem}{Theorem}

\newtheorem{lemma}{Lemma}

\newtheorem{definition}{Definition}

\newtheorem{remark}{Remark}

\usepackage{tikz-cd}
\usepackage{tikz} 
\usetikzlibrary{chains, positioning, shapes.symbols,shapes,arrows}
\definecolor{darkblue}{RGB}{9,72,90}
\definecolor{lightblue}{RGB}{124,184,201}
\usepackage{pgf} 

\usepackage{enumerate} 
\usepackage{caption}
\usepackage{tikz-cd}
\usepackage{tikz} 
\usetikzlibrary{chains, positioning, shapes.symbols,shapes,arrows}
\definecolor{darkblue}{RGB}{9,72,90}
\definecolor{lightblue}{RGB}{124,184,201}
\usepackage{pgf} 

\definecolor{amethyst}{HTML}{9966CC} 
\definecolor{amber}{HTML}{FFBF00}    
\definecolor{aqua}{HTML}{00FFFF}     

\title{ \bf
    Dynamic Programming Principle and Stabilization for Mean-Field Quantum Filtering Systems}

\author{Sofiane Chalal$^{1}$, Nina H. Amini$^{2}$, Hamed Amini$^{3}$, Mathieu Laurière$^{4}$ 
\thanks{$^{1}$ S. Chalal is with L2S, Universit\'{e} Paris-Saclay,
        CentraleSupélec, France
        {\tt\tiny sofiane.chalal@centralesupelec.fr}}%
\thanks{$^{2}$ N.H. Amini is with CNRS, L2S, Universit\'{e} Paris-Saclay,
        CentraleSupélec, France
        {\tt\tiny nina.amini@centralesupelec.fr}}%
\thanks{$^{1}$ H. Amini is with Department of Industrial and Systems Engineering, University of Florida, Gainesville, FL, USA.
{\tt\tiny aminil@ufl.edu.}}
\thanks{$^{4}$ M. Laurière is with Shanghai Frontiers Science Center of AI and DL NYU-ECNU Institute of Mathematical Sciences NYU Shanghai
        {\tt\tiny mathieu.lauriere@nyu.edu}.}}
            
\makeatletter 
\let\NAT@parse\undefined
\makeatother
\usepackage[colorlinks = true,
            linkcolor = blue,
            urlcolor  = blue,
            citecolor = blue,
            anchorcolor = blue]{hyperref} 

\begin{document}

\maketitle
\thispagestyle{empty}
\pagestyle{empty}


\begin{abstract}
Working within the quantum filtering framework, we establish a dynamic programming principle in an infinite-dimensional setting by embedding the state space into the Hilbert–Schmidt space. We then study a stabilization problem for continuously monitored Ising-coupled qubits and, in the mean-field limit, demonstrate quantum state reduction together with exponential convergence toward prescribed eigenstates under suitable feedback laws.
\end{abstract}

\section{Introduction }
\medskip

Quantum control is essential for the development of emerging quantum technologies.
In practice, quantum devices operate as \emph{open} systems: they interact with an environment,
are subject to decoherence, and are probed through imperfect measurements.
A particularly powerful paradigm is \emph{continuous-time measurement and feedback},
where an experimenter steers the dynamics in real time using the measurement record.
In this setting, the controller does not access the underlying physical state directly;
instead, decisions are based on the observation process and on the associated
\emph{conditional state} produced by quantum filtering \cite{belavkin09qdpp,belavkin87non}.
This viewpoint places quantum feedback control in close analogy with classical
stochastic control under partial observation \cite{boutenhandel07,boutenhandel08,handel05}.

In the classical theory, stochastic control provides a canonical framework for synthesizing
feedback policies for dynamical systems driven by noise and incomplete information.
When several decision-makers act simultaneously, the natural extension is that of
\emph{differential games}, which generalize stochastic control to multi-agent settings:
each player selects its own control according to its own objective, while the state evolves
under the combined actions of all agents.
When the number of interacting agents becomes large, however, direct analysis and computation
quickly become intractable. This motivates \emph{mean-field} formulations, where weak interactions
and symmetry allow one to approximate the collective effect through a limiting 
dynamics. Mean-field control and mean-field games have therefore become standard tools for
modeling and solving large-scale control and game-theoretic problems \cite{bensoussan13mean}.

A similar hierarchy of ideas has emerged in the quantum regime.
Kolokoltsov~\cite{kolokoltsov22dynamic} first introduced a continuous-time framework for dynamic quantum games.
Subsequently, \cite{kolokoltsov21law} derived a mean-field equation for $N$-particle quantum systems
under continuous observation and proposed corresponding classes of quantum mean-field games~\cite{kolokoltsov22qmfg,kolokoltsov21qmfgcounting}.
In the finite-dimensional case $\mathbb{H}=\mathbb{C}^d$, taking the coupling operators to be the Gell--Mann matrices leads to a non-degenerate diffusion on the projective space $\mathbf{P}(\mathbb{C}^d)$ with generator equal to the Laplace--Beltrami operator. This geometric structure facilitates the analysis of the corresponding Hamilton--Jacobi equations and permits the application of classical stochastic control methods on manifolds. In contrast, our goal is to formulate quantum control and quantum
dynamic games \emph{in full generality} by working directly with density operators, which is essential
for infinite-dimensional models and for settings where such geometric parametrizations are not available.

A cornerstone of classical stochastic control is the \emph{dynamic programming principle} (DPP),
which yields the Hamilton--Jacobi--Bellman (HJB) equation and characterizes the value function through a partial differential equation. 
In the quantum filtering framework, an analogous route is conceptually available:
the conditional density operator is a sufficient statistic for feedback control,
and the controlled conditional dynamics is a  stochastic differential equation
with values in the space of density operators \cite{belavkin09qdpp,belavkin09cybernetics,boutenhandel07}. However, in infinite dimension the natural state space is the Banach space of trace-class operators,
where standard It\^o calculus is not directly applicable. This creates a technical obstruction to establishing a rigorous DPP/HJB theory beyond
finite-dimensional matrix models.

The first objective of this paper is to overcome this obstruction by working in the
Hilbert--Schmidt space.
Since trace-class operators embed into the Hilbert--Schmidt space,
we can view density operators as elements of a Hilbert space and invoke stochastic calculus
in a well-posed setting.
We formalize Fr\'echet differentiability on the Hilbert--Schmidt space,
identify the associated second-order operator induced by the controlled filtering dynamics,
and prove a DPP for the value function of a broad class of running and terminal costs.
As a consequence, we derive the corresponding HJB equation in an infinite-dimensional operator setting.
This provides a rigorous dynamic-programming foundation for quantum feedback control problems
whose underlying Hilbert space is infinite-dimensional.

In the second part, we investigate the stabilization problem for a system of continuously monitored $N$-qubits with Ising-type interactions, formulated as a dynamic game. By passing to the mean-field limit $N \to \infty$, we prove quantum state reduction and exponential stabilization toward a prescribed equilibrium state of the limiting dynamics. The result is obtained under an appropriate continuous-time linear feedback control, using a fixed-point argument to characterize and stabilize the mean-field behavior.

\paragraph*{Organization} The remainder of the paper is organized as follows.
Section~\ref{sec:notations} introduces notations and recalls basic facts from quantum theory and operator spaces.
In Section~\ref{sec:formalism}, we review quantum filtering and feedback control theroy; we then establish the dynamic programming principle in infinite dimension and  derive the associated HJB equation.
Section~\ref{sec:nbody} develops the $N$-body continuously monitored dynamics framework, and discusses the mean-field limit leading to the limiting mean-field  quantum filtering equation.
In Section~\ref{sec:game}, we formulate the problem of quantum state preparation for $N$-body monitored Ising-coupled qubits. 
Passing to the mean-field limit, we prove quantum state reduction and exponential stabilization under feedback, and we present a two-qubit game example illustrating the game-theoretic interpretation and the implementation of local feedback via reduced states.
Finally, Section~\ref{sec:conc} concludes the paper and outlines perspectives for future work.


\section{Notations}
\label{sec:notations}
The sets of real and complex numbers are denoted by $\mathbb{R}$ and $\mathbb{C}$ respectively , $\mathcal{M}_d(\mathbb{C})$ denotes the set of complex $d\times d$ matrices. 
For any $z \in \mathbb{C}$, 
$\overline{z}$ denotes its complex conjugate, and the roman symbol $\mathrm{i}$ is used 
for the imaginary unit. For any integer $N \ge 1$, we use the notation $[N] := \{1,\dots,N\}$. We denote a measurable space $(\mathfrak{X},\mathcal{X})$ endowed with a measure $\mu$. Fix a finite time horizon $T > 0$, and $\mathcal{U}$ denotes the set of admissible controls which is the set of  $\mathbb{R}$-valued bounded progressively measurable process $(\alpha_{t})_{0 \leq t \leq T}$ with respect to the underlying filtration, which will be clear from context.

We fix the notation $\mathbb{H}$ for a complex separable Hilbert space endowed with the scalar product $\langle \bullet,\bullet\rangle : \mathbb{H}\times \mathbb{H} \to \mathbb{C}$, with associated norm $\|\psi \| := \sqrt{\langle \psi,\psi\rangle}$.  The set $\mathcal{B}(\mathbb{H})$ denotes the set of bounded linear operator on $\mathbb{H}$. For every $ {O} \in \mathcal{B}(\mathbb{H})$, denote by ${{O}}^{\dagger}$ its adjoint operator. We denote $\mathrm{tr} : \mathcal{B}(\mathbb{H}) \mapsto \mathbb{C}$ trace of any operator, the space $\mathcal{B}_{1}(\mathbb{H})$ (resp. $\mathcal{B}_{2}(\mathbb{H})$) is the space of trace-class operators (resp. of Hilbert-Schmidt operators):
{\small\begin{align*}
    \varrho \in \mathcal{B}_1(\mathbb{H}) &\Longleftrightarrow \mathrm{tr}(|\varrho|) < \infty, \quad
    \varrho \in \mathcal{B}_2(\mathbb{H}) \Longleftrightarrow \mathrm{tr}(\varrho^{\dagger}\varrho) < \infty,
\end{align*}
}

Endowed with inner product 
{\small \begin{align*}
 \langle \varrho_1, \varrho_2 \rangle_{2} := \sum_{n \in \mathbb{N}} \langle \varrho_1\psi_n,\varrho_2\psi_n\rangle = \mathrm{tr}(\varrho_1^{\dagger}\varrho_{2}),
\end{align*}}
for all $\varrho_1,\varrho_{2} \in \mathcal{B}_{2}(\mathbb{H})$, the induced norm $\|\varrho\|_{2} := \langle \varrho,\varrho\rangle_{2} = \sqrt{\mathrm{tr}(\varrho\varrho^{\dagger})}$ makes $\mathcal{B}_{2}(\mathbb{H})$ into a Hilbert space. To avoid notational overload, we use the same bracket $\langle\bullet,\bullet\rangle$ for all inner products.

In quantum mechanics, a state is described by an operator in the space of density operators $\mathcal{S}(\mathbb{H})$, defined as
{\small
\begin{align*}
    \mathcal{S}(\mathbb{H}) := \Big\{ \varrho \in \mathcal{B}_{1}(\mathbb{H}) \;\Big|\; \varrho \geq 0,\; \mathrm{tr}(\varrho)=1,\; \varrho=\varrho^{\dagger} \Big\}.
\end{align*}}

We set $\mathcal{B}_{0}(\mathbb{H})$ the space of bounded zero-trace operators:
{\small\begin{align*}
    \varrho \in \mathcal{B}_0(\mathbb{H}) \Longleftrightarrow \mathrm{tr}(\varrho) = 0.
\end{align*}
}

We let $\mathbf{1} $ denote the identity operator without explicitly specifying the underlying Hilbert space in use. 
For any   ${O}_{a},{O}_{b} \in \mathcal{B}(\mathbb{H})$, set $[{O}_{a},{O}_{b}]:= {O}_{a}{O}_{b} - {O}_{b}{O}_{a}$  and $\{{O}_{a},{O}_{b}\}:= {O}_{a}{O}_{b} + {O}_{b}{O}_{a}$.

For any operator ${O} \in \mathcal{B}(\mathbb{H})$ and for $l \in [N]$, denote by 
$\mathbf{O}_{l} := \mathbf{1}\otimes\dots\otimes{O}\otimes\dots\otimes\mathbf{1}$ the operator on $\mathcal{B}_{1}(\mathbb{H}^{\otimes N})$ that acts only on the $l$-th Hilbert space.

For any operator $B \in \mathcal{B}(\mathbb{H}\otimes\mathbb{H}) $ and for $l,l' \in \{1,\dots,N \}$ denote $\mathbf{B}_{ll'}$ the operator on $\mathcal{B}(\mathbb{H}^{\otimes N})$ that acts only on $\mathbb{H}_{l}$ and $\mathbb{H}_{l'}$.

The following matrices form the set of Pauli matrices:
{\small \begin{align*}
{\sigma}_x &:= \begin{pmatrix}0 & 1 \\ 1 & 0\end{pmatrix}, \quad
{\sigma}_y := \begin{pmatrix}0 & -\mathrm{i}\\ \mathrm{i} & 0\end{pmatrix}, \quad
{\sigma}_z := \begin{pmatrix}1 & 0 \\ 0 & -1\end{pmatrix}.
\end{align*}}

We note the ground and excited state: 
{\small \begin{align*}
    \rho_g := \begin{pmatrix}0 & 0 \\ 0 & 1\end{pmatrix}, \; 
    \rho_e := \begin{pmatrix}1 & 0 \\ 0 & 0\end{pmatrix}.\end{align*}}
We also set the following density matrices for the coupled system composed of two-level system
{\small \begin{align*}
    \rho_{ee} &= \rho_{e}\otimes\rho_{e},\; \rho_{eg} = \rho_{e}\otimes\rho_{g},\\ \rho_{gg} &= \rho_{g}\otimes\rho_{g},\; \rho_{ge} = \rho_{g}\otimes\rho_{e}.
\end{align*}}

\section{Quantum Filtering Formalism}
\label{sec:formalism}

Filtering theory allows for the continuous estimation of a system's state by deriving this dynamics from a unitary evolution, including the system plus the environment. In this framework, let $\mathbb{H}$ denote the system Hilbert space, and $\mathbb{D}$ the field algebra representing a bath. The quantum probability space is defined as $(\mathbb{A},\mathfrak{E})$, where $\mathbb{A} := \mathcal{B}(\mathbb{H})\otimes\mathbb{D}$, and $\mathfrak{E}$ is a state on $\mathbb{A}$. The time evolution of every observable $X = X^{\dagger} \in \mathcal{B}(\mathbb{H})$ coupled to a field is given by the unitary operator solution to a quantum stochastic differential equation \cite{parthasarathy12}:
{\small \begin{align*}
\mathrm{d}\mathsf{U}_t
  = \Big( \mathrm{d}\mathsf{B}_{t}^{\dagger}L
          - L^{\dagger}\mathrm{d}\mathsf{B}_{t}
          - \tfrac{1}{2} L^{\dagger}L\mathrm{d}t
          - \mathrm{i} \tilde{H} \mathrm{d}t \Big)\mathsf{U}_t,
\qquad \mathsf{U}_0 = \mathbf{1},
\end{align*}}
where $\tilde{H}=\tilde{H}^{\dagger}$ is the  Hamiltonian, $L\in\mathcal{B}(\mathbb{H})$
is the coupling operator, and $\mathsf{B}_t$ is a quantum annihilation process which satisfies quantum Ito's rule \cite{parthasarathy12}:
{\small \begin{equation*}
\begin{cases} 
\mathrm{d}\mathsf{B}_t\mathrm{d}\mathsf{B}_t = \mathrm{d}\mathsf{B}_t^{\dagger}\mathrm{d}\mathsf{B}_t = \mathrm{d}\mathsf{B}_t\mathrm{d}t = (\mathrm{d}t)^2 = 0, \\
\mathrm{d}\mathsf{B}_t\mathrm{d}\mathsf{B}^{\dagger}_t = \mathrm{d}t,
\end{cases}
\end{equation*}}
The evolution of the observable is {\small $
    X_t = \mathsf{U}_t^{\dagger}X\mathsf{U}_t$.} Under  homodyne detection \cite{gardiner04noise}. The measurement records correspond to the process {\small \begin{align*}
Y_t &= \mathsf{U}_t^{\dagger}(\mathsf{B}_t + \mathsf{B}_t^{\dagger})\mathsf{U}_t.
\end{align*}}
Consequently, the algebra generated by $(Y_t)_{t \geq 0}$, denoted by $\mathcal{Y}_t$ can be identified with a classical commutative algebra
{\small $L^\infty(\Omega,\mathcal{F}_t,\mathbb{P})$} for some filtered probability space
{\small $(\Omega,\mathcal{F},(\mathcal{F}_t)_{t \geq 0},\mathbb{P})$}, and the process $(Y_t)_{t\geq 0}$
is identified with an adapted real-valued  stochastic process.
 
\subsection{Filtering equation}
The filtering problem is now well posed in the sense that, for every observable $X \in \mathcal{B}(\mathbb{H})$,  $X_t = \mathsf{U}_t^{\dagger}X\mathsf{U}_t$, provides the optimal estimate of the observable given the
measurement record. We define
{$\pi_t(X) := \mathfrak{E}[X_t|\mathcal{Y}_t]
$.}

Using the equivalence between the Schrödinger and Heisenberg pictures, there exists a process $\rho_t \in \mathcal{S}(\mathbb{H})$ such that {\small $
     \pi_t(X) := \mathrm{tr}(\rho_tX)
$}.
Thanks to the cycle property of the trace, the quantum filtering equation \cite{boutenhandel07} can be derived: 
{\small
\begin{align}\label{eq:belavkin}
    \mathrm{d}\rho_t &= \mathcal{L}[\rho_t]\,\mathrm{d}t + \mathcal{R}[\rho_t]\mathrm{d}W_t,
\end{align}}
where {\small $W_t = Y_t - \int_{0}^{t}\mathrm{tr}((L+L^{\dagger})\rho_s)\mathrm{d}s $} is a standard one-dimensional Wiener process. 

The drift term $\mathcal{L}$, called the Lindbladian, is divided into two parts $\mathcal{L} = \mathcal{H} + \mathcal{D}$, where  $\mathcal{H}$ is the Hamiltonian part, and $\mathcal{D}$ is the dissipative part, corresponding to the coupling with the environment. The diffusive term $\mathcal{R}$ represents the measurement part, which encodes the observation records. These superoperators are given by 
{\small
\begin{align*}
    \mathcal{H}[\varrho] &:= -\mathrm{i}[\tilde{H},\varrho], \quad 
    \mathcal{D}[\varrho] := L\varrho L^{\dagger} - \frac{1}{2}\{L^{\dagger}L,\varrho\},\\
    \mathcal{R}[\varrho] &:= L\varrho + \varrho L^{\dagger} - \mathrm{tr}\big((L+L^{\dagger})\varrho\big)\varrho.
\end{align*}}
Taking the expectation $\bar{\rho}_{t}:=\mathbb{E}_{\mathbb{P}}[\rho_{t}]$, we recover the Lindblad equation \cite{GKLS17History}:
{\small
\begin{align}\label{eq:lindblad}
\frac{\mathrm{d}\bar{\rho}_t}{\mathrm{d}t} \;=\; \mathcal{L}[\bar{\rho}_t]
= -\mathrm{i}[\tilde{H},\bar{\rho}_{t}]
+ \Big(L\bar{\rho}_{t}L^{\dagger} - \tfrac{1}{2}\{L^{\dagger} L,\bar{\rho}_{t}\}\Big).
\end{align}}
The quantum filtering  equation is a quantum version of
a Kushner–Stratonovich equation, similarly the Lindblad equation \eqref{eq:lindblad} can be seen as a quantum analog of the Fokker-Planck-Kolmogorov equation, and the Lindbladian $\mathcal{L}$ the quantum analog of the adjoint of an infinitesimal generator of an SDE. 
\subsection{Optimal control}
In the context of quantum feedback control, the experimenter has access to the measurement record $(Y_t)_{t \ge 0}$ and can use this information to modulate the system dynamics in real time. The control variable, denoted by $\alpha_t$, typically acts on the system through an external field or a tunable Hamiltonian term. 

In this case the total Hamiltonian is denoted by
$\mathcal{H}[\varrho, \beta]$ and defined as follows
{\small
\begin{align*}
    \mathcal{H}[\varrho, \beta] := -\mathrm{i}[\tilde{H} + \beta\widehat{H},\varrho],
\end{align*}}
where $\widehat{H} = \widehat{H}^{\dagger} \in \mathcal{B}(\mathbb{H})$ corresponds to the control operator associated with the actuator. The full controlled Lindbladian becomes ${\small \mathcal{L}[\varrho, \beta] = \mathcal{H}[\varrho, \beta] + \mathcal{D}[\varrho]}$.

The corresponding controlled quantum filtering equation governing the conditional state $\rho_t$ under the control $\alpha_t$ reads:
{\small
\begin{align}\label{eq:contbelavkin}
    \mathrm{d}\rho_t 
    = \mathcal{L}[\rho_t, \alpha_t]\mathrm{d}t
    + \mathcal{R}[\rho_t]\mathrm{d}W_t,\quad
    \rho_0 = \varrho \in \mathcal{S}(\mathbb{H}).
\end{align}}
The control process $\{\alpha_t\}_{t \ge 0}$ is assumed to be adapted to the measurement
filtration generated by the record process, ensuring causality and the implementability of the feedback strategy.

In the classical setting, the optimal feedback control problem is solved by coupling the HJB equation with the Kushner-Stratonovich equation. This approach is carried out in an analogous manner in the quantum regime \cite{belavkin09cybernetics,gough05hamilton,belavkin79optimal}, where dynamic programming was used. In the present work, we rigorously establish the dynamic programming principle in infinite dimensional setting. Let us nonetheless note that in the case of a finite-dimensional Hilbert space $\mathbb{H}=\mathbb{C}^{d}$, the space of operators $\mathcal{B}(\mathbb{H})=\mathcal{M}_d(\mathbb{C})$ is isomorphic to $\mathbb{R}^{2d^2}$. Consequently, these spaces can be identified with standard Euclidean spaces, allowing one to invoke classical results of stochastic control. However, in the infinite-dimensional setting, the situation is more delicate. Although the Banach space of trace-class operators $\mathcal{B}_1(\mathbb{H})$ appears to be the natural framework when we work with density operators, stochastic calculus is not well-defined therein\footnote{Extensions of Itô stochastic calculus have been developed for some classes of Banach spaces through the theory of UMD spaces \cite{UMD15}; however, the space of trace-class operators does not belong to those classes.}. Therefore, it is more convenient to work within the space of Hilbert--Schmidt operators $\mathcal{B}_2(\mathbb{H})$, which is a Hilbert space and where Itô theory is established.

We now introduce differentiability of functionals on the Hilbert--Schmidt space  $\mathcal{B}_{2}(\mathbb{H})$.
\begin{definition}[Differentiability]
Let $\mathrm{G}: \mathcal{B}_{2}(\mathbb{H}) \to \mathbb{R}$ be a functional. We say that $\mathrm{G}$ admits a Fréchet derivative if there exists an $\mathcal{B}(\mathbb{H})$-valued function $\nabla \mathrm{G}[\bullet]$ on $\mathcal{B}_{2}(\mathbb{H})$ such that 
{\small \begin{align*}
    \lim_{\varepsilon \to 0}\frac{1}{\varepsilon}\Big(\mathrm{G}[\bullet + \varepsilon \tau] - \mathrm{G}[\bullet] \Big) = \big\langle \tau, \nabla \mathrm{G}[\bullet] \big\rangle,  
\end{align*}}
for each {\small $\tau \in \mathcal{B}_{0}(\mathbb{H})\cap\mathcal{B}_{2}(\mathbb{H})$}. In the same manner, a Hessian {\small $\nabla^{\otimes 2} \equiv \nabla \otimes \nabla$} is defined as a mapping from functionals on {\small $\mathcal{B}_{2}(\mathbb{H})$} to {\small $\mathcal{B}(\mathbb{H}^2)$}, via 
{\small \begin{align*}
    \lim_{\varepsilon,\varepsilon' \to 0} \frac{1}{\varepsilon \varepsilon'} &\Big( \mathrm{G}[\bullet + \varepsilon{\tau} + \varepsilon'\tau' ] - \mathrm{G}[\bullet + \varepsilon{\tau}] - \mathrm{G}[\bullet + \varepsilon'{\tau}'] + \mathrm{G}[\bullet]\Big)  \\
    &=\big\langle \tau \otimes \tau' , \nabla^{\otimes 2}\mathrm{G}[\bullet] \big\rangle 
\end{align*}}
for each {\small $\tau, \tau' \in \mathcal{B}_{0}(\mathbb{H})\cap\mathcal{B}_{2}(\mathbb{H}).$}

We say that the functional $\mathrm{G}$ is continuously differentiable (resp. twice continuously differentiable) whenever $\nabla\mathrm{G}[\bullet]$  (resp. $\nabla^{\otimes 2}\mathrm{G}[\bullet]$) exists and is continuous in the Hilbert-Schmidt norm topology.
\end{definition}
Viewed as a function of the variable $\varrho$, the mappings 
$\mathcal{L}$ and $\mathcal{R}$ are Lipschitz continuous in the trace-norm topology on $\mathcal{B}_1(\mathbb{H})$. 
It follows that $\mathcal{L}$ is automatically continuous when regarded as mappings on the Hilbert--Schmidt space $\mathcal{B}_2(\mathbb{H})$. However, this is  not the case for $\mathcal{R}$ in general. In the following we restrict to the coupling operator $L$ of Hilbert-Schmidt form, the mapping $\mathcal{R}$ is Lipschitz continuous.
\begin{lemma}\label{cond-lipch}
If  {\small $L \in \mathcal{B}_{2}(\mathbb{H})$}, then  the non-linear mapping $\mathcal{R}$ is Lipschitz continuous with respect to the Hilbert--Schmidt norm. 
\end{lemma}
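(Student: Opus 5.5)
We need to bound $\|\mathcal{R}[\varrho_1]-\mathcal{R}[\varrho_2]\|_2$ for $\varrho_1,\varrho_2\in\mathcal{B}_2(\mathbb{H})$, presumably on the relevant domain (density operators, or a bounded subset of $\mathcal{B}_2$). The plan is to split $\mathcal{R}$ into its linear and nonlinear parts. First, the linear part $\varrho\mapsto L\varrho+\varrho L^\dagger$ is handled by the ideal property of Hilbert--Schmidt operators, $\|A\varrho\|_2\le \|A\|\,\|\varrho\|_2$ and $\|\varrho A\|_2\le\|A\|\,\|\varrho\|_2$ for $A\in\mathcal{B}(\mathbb{H})$. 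Since $\|L\|\le\|L\|_2$, this gives
\begin{align*}
\|L(\varrho_1-\varrho_2)+(\varrho_1-\varrho_2)L^\dagger\|_2\le 2\|L\|_2\|\varrho_1-\varrho_2\|_2 .
\end{align*}

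The nonlinear part $\varrho\mapsto \mathrm{tr}((L+L^\dagger)\varrho)\varrho$ is where the hypothesis $L\in\mathcal{B}_2(\mathbb{H})$ is essential. For a general bounded $L$, the functional $\varrho\mapsto\mathrm{tr}((L+L^\dagger)\varrho)$ is not even defined, let alone continuous, on $\mathcal{B}_2$. When $L\in\mathcal{B}_2$, however, it is the Hilbert--Schmidt inner product $\langle L+L^\dagger,\varrho\rangle_2$ (noting $(L+L^\dagger)^\dagger=L+L^\dagger$). By Cauchy--Schwarz,
\begin{align*}
|\mathrm{tr}((L+L^\dagger)\varrho)|\le 2\|L\|_2\|\varrho\|_2 .
\end{align*}

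To estimate the difference of the nonlinear part, I would write
\begin{align*}
\mathrm{tr}((L+L^\dagger)\varrho_1)\varrho_1-\mathrm{tr}((L+L^\dagger)\varrho_2)\varrho_2
=\mathrm{tr}((L+L^\dagger)(\varrho_1-\varrho_2))\varrho_1+\mathrm{tr}((L+L^\dagger)\varrho_2)(\varrho_1-\varrho_2).
\end{align*}
Applying the functional bound to each term gives
\begin{align*}
2\|L\|_2(\|\varrho_1\|_2+\|\varrho_2\|_2)\|\varrho_1-\varrho_2\|_2 .
\end{align*}

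The main obstacle is that this term is quadratic, so the resulting constant depends on $\|\varrho_i\|_2$. I would resolve this by restricting to the state space: for $\varrho\in\mathcal{S}(\mathbb{H})$ one has $\|\varrho\|_2\le\|\varrho\|_1=1$, which yields the global constant $6\|L\|_2$ on $\mathcal{S}(\mathbb{H})$. More generally, the estimate gives local Lipschitz continuity on bounded sets of $\mathcal{B}_2$. If a globally Lipschitz map on all of $\mathcal{B}_2$ is needed for the SDE theory, I would compose with a radial retraction onto the unit ball of $\mathcal{B}_2$ (which is $1$-Lipschitz) and note that this leaves $\mathcal{R}$ unchanged on $\mathcal{S}(\mathbb{H})$, where the solution lives. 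This is the interpretation I would state explicitly.
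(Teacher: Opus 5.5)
Your proof is correct and follows essentially the same route as the paper. Both split off the linear part using the ideal property with $\|L\|\le\|L\|_2$, expand the difference of $\mathrm{tr}(K\varrho)\varrho$ as a product rule, apply Cauchy--Schwarz in $\mathcal{B}_2(\mathbb{H})$, and use $\|\varrho\|_2\le 1$ on $\mathcal{S}(\mathbb{H})$ to reach the constant $6\|L\|_2$. Your remarks on local Lipschitz continuity on bounded sets and on the radial retraction only make explicit the restriction to density operators, which the paper leaves implicit.
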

\begin{proof}
Let $\rho, \rho' \in \mathcal{S}(\mathbb{H})$ and denote $\delta = \rho - \rho'$, set $K = L+L^{\dagger}$. We have
{\small
\begin{align*}
    \|\mathcal{R}[\delta]\|_2 \leq \|L\delta + \delta L^\dagger\|_2 + \|\mathrm{tr}(K\rho)\rho - \mathrm{tr}(K\rho')\rho'\|_2.
\end{align*}}
For the linear part, using the property $\|AB\|_2 \leq \|A\|\|B\|_2$ and the fact that $\|L\| \leq \|L\|_2 < \infty$, we have
{\small
\begin{align*}
    \|L\delta + \delta L^\dagger\|_2 \leq 2\|L\|_{2} \|\rho - \rho'\|_2.
\end{align*}}
For the nonlinear part, we use the standard decomposition
{\small
\begin{align*}
    \mathrm{tr}(K\rho)\rho - \mathrm{tr}(K\rho')\rho' &= \mathrm{tr}(K\rho)\delta + \mathrm{tr}(K\delta)\rho' \\
    &= \mathrm{tr}(K\rho)\delta + \mathrm{tr}(K\delta)\rho'.
\end{align*}}
Taking the Hilbert--Schmidt norm and applying the Cauchy-Schwarz inequality, we obtain
{\small
\begin{align*}
    \|\mathrm{tr}(K\rho)\rho - \mathrm{tr}(K\rho')\rho'\|_2
    &\leq |\mathrm{tr}(K\rho)| \|\delta\|_2 + |\mathrm{tr}(K\delta)| \|\rho'\|_2 \\
    &\leq (\|K\|_2 \|\rho\|_2) \|\delta\|_2 + (\|K\|_2 \|\delta\|_2) \|\rho'\|_2 \\
    &= \|K\|_2 (\|\rho\|_2 + \|\rho'\|_2) \|\rho - \rho'\|_2.
\end{align*}}
Combining these estimates, we find
{\small
\begin{align*}
    \|\mathcal{R}[\delta]\|_2 &\leq \Big( 2\|L\|_2 + \|L+L^\dagger\|_2 (\|\rho\|_2 + \|\rho'\|_2) \Big) \|\delta\|_2\\
    &\leq 6\|L\|_{2}\|\delta\|_{2}.
\end{align*}}
\end{proof}

For a functional $\mathrm{G}$ that is twice continuously differentiable on $\mathcal{B}_2(\mathbb{H})$, we have
{\small
\begin{align*}
\lim_{\varepsilon \to 0} \frac{1}{\varepsilon}\Big\{ \mathbb{E}\Big[\mathrm{G}[\rho_{t+\varepsilon}] - \mathrm{G}[\rho_{t}] \Big]\Big\} &= \mathrm{D}(\rho_t,\alpha_t)\big[\mathrm{G}[\rho_t]\big], 
\end{align*}
}
where $\mathrm{D}(\varrho,\beta)$ is the elliptic operator defined by
{\small 
\begin{align*}
    \mathrm{D}(\varrho,\beta)[\bullet] := \langle \mathcal{L}(\varrho,\beta), \nabla[\bullet]\rangle + \frac{1}{2}\langle \mathcal{R}(\varrho)\otimes\mathcal{R}(\varrho),\nabla^{\otimes 2}[\bullet]\rangle.
\end{align*}}
To specify the control objective, we associate to each admissible control 
$\alpha=\{\alpha_s\}_{s\ge t}$ the cost functional
{\small 
\begin{align*}
    \mathcal{S}(t,\varrho, \alpha) := \mathbb{E}\big[\int_{t}^{T}\mathrm{C}(\rho_s, \alpha_s)\mathrm{d}s + \mathrm{F}(\rho_T)\big].
\end{align*}}
In applications, it is common to take the terminal cost 
$\mathrm{F}: \mathcal{B}_2(\mathbb{H}) \to \mathbb{R}$ linear in
$\varrho$ and the running cost 
$\mathrm{C}: \mathcal{B}_2(\mathbb{H})\times \mathcal{U} \to \mathbb{R}$
quadratic in $\varrho$ and in the control  $\beta$. In
contrast to \cite{gough05hamilton,belavkin79optimal}, where only linear costs
were considered, we allow for quadratic terms. For example,
{\small
\begin{align*}
    \mathrm{C}(\varrho,\beta) &= \langle \varrho,\mathrm{C}_1\varrho\rangle + \beta^{2}\langle \varrho, C_2\rangle, \quad C_1,C_2 \in \mathcal{B}_{2}(\mathbb{H})\\
    \mathrm{F}(\varrho) &= \langle F, \varrho \rangle, \quad F \in \mathcal{B}_2(\mathbb{H}).
 \end{align*}}
The associated value function is defined by
{\small
\begin{align*}
\mathsf{V}(t,\varrho) := \inf_{ \alpha \in \mathcal{U}}\mathcal{S}(t, \varrho, \alpha)
\end{align*}}

So now we can state the Dynamic Programming Principle.
\begin{theorem} Let $t \in [0,T]$ and $\varrho\in \mathcal{S}(\mathbb{H})$.
For every stopping time $\tau$ with $t\le \tau\le T$, the value function satisfies
{\small\begin{align*}
    \mathsf{V}(t,\varrho) &= \inf_{ \alpha \in \mathcal{U}} \mathbb{E}\bigg[\int_{t}^{\tau}\mathrm{C}(\rho_s^{t},\alpha_s)\mathrm{d}s + \mathsf{V}(\tau,\rho_{\tau}^{t})\bigg],
\end{align*}}
 where $\rho^{t}$ denotes the solution of~\eqref{eq:contbelavkin} starting at time $t$ with initial condition $\rho_t=\varrho$ and control $\alpha$.
\end{theorem}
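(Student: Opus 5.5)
We follow the classical route to the dynamic programming principle for controlled diffusions in Hilbert spaces (Fleming--Soner, Fabbri--Gozzi--Święch). The filtering equation \eqref{eq:contbelavkin} is viewed as an SDE on the Hilbert space $\mathcal{B}_2(\mathbb{H})$. The Lipschitz drift $\mathcal{L}$ is affine in the control, and Lemma~\ref{cond-lipch} gives a Lipschitz diffusion $\mathcal{R}$ (we assume $L\in\mathcal{B}_2(\mathbb{H})$ throughout). Since $W$ is one-dimensional, the stochastic integral is a standard Hilbert-valued Itô integral.

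\textbf{Step 1: well-posedness, flow and stability.} We first prove existence, pathwise uniqueness and the flow property $\rho^{t,\varrho}_s=\rho^{\tau,\rho^{t,\varrho}_\tau}_s$ for $s\ge\tau$, by Picard iteration in $L^2(\Omega;C([t,T];\mathcal{B}_2))$. To use the Lipschitz bounds of Lemma~\ref{cond-lipch}, which hold on $\mathcal{S}(\mathbb{H})$, we either extend $\mathcal{R}$ by a Lipschitz cutoff outside a ball, or invoke invariance of $\mathcal{S}(\mathbb{H})$ from the unitary derivation. Gronwall and BDG then give the stability estimates
\[
\mathbb{E}\sup_{s\in[t,T]}\|\rho^{t,\varrho}_s-\rho^{t,\varrho'}_s\|_2^2\le C\|\varrho-\varrho'\|_2^2,
\qquad
\mathbb{E}\|\rho^{t,\varrho}_s-\varrho\|_2^2\le C|s-t|,
\]
with $C$ uniform in $\alpha\in\mathcal{U}$, since controls are bounded. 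Because $\mathrm{C}$ and $\mathrm{F}$ are quadratic or linear with Hilbert--Schmidt coefficients, they are locally Lipschitz, and the states are bounded ($\|\rho\|_2\le1$). Hence $\mathcal{S}(t,\cdot,\alpha)$ is Lipschitz in $\varrho$ uniformly in $\alpha$, and $\mathsf V$ is Lipschitz in $\varrho$ and $1/2$-Hölder in $t$. This continuity is what later allows us to avoid measurable selection theorems.

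\textbf{Step 2: the easy inequality ($\ge$).} Fix $\alpha$ and condition on $\mathcal{F}_\tau$. By the flow property and the strong Markov property of the SDE, we will show
\[
\mathbb{E}\Big[\int_\tau^T\mathrm{C}\,ds+\mathrm{F}(\rho_T)\,\Big|\,\mathcal{F}_\tau\Big]=\mathcal{S}(\tau,\rho_\tau,\alpha(\omega,\cdot))\ge\mathsf V(\tau,\rho_\tau)\quad\text{a.s.}
\]
Here the shifted control $\alpha(\omega,\cdot)$ is admissible for the regular conditional probability. To make this rigorous we work on the canonical Wiener space, so that controls are measurable functionals of the Brownian path and freezing $\omega$ up to $\tau$ yields a control adapted to the shifted Brownian motion, for which the cost law is unchanged. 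Taking expectations and then the infimum over $\alpha$ gives ``$\ge$''.

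\textbf{Step 3: the hard inequality ($\le$), the main obstacle.} Fix $\varepsilon>0$. We first handle stopping times $\tau$ with finitely many values $t_k$, then pass to general $\tau$ by approximating $\tau_n\downarrow\tau$ and using the continuity of Step 1.

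The difficulty is that the state space is not locally compact. We plan to use separability of $\mathcal{B}_2(\mathbb{H})$ instead: cover $\mathcal{S}(\mathbb{H})$ by countably many disjoint Borel sets $A_j$ of diameter below $\delta$ with centers $\varrho_j$. For each $(t_k,\varrho_j)$, choose an $\varepsilon$-optimal control $\alpha^{k,j}$. By the uniform Lipschitz estimate, $\alpha^{k,j}$ is $(\varepsilon+C\delta)$-optimal for every initial state in $A_j$.

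Given any $\alpha$, we then paste controls: define $\hat\alpha=\alpha$ on $[t,\tau)$, and on $\{\tau=t_k,\ \rho_\tau\in A_j\}$ set $\hat\alpha=\alpha^{k,j}$ driven by the shifted Brownian motion $W_{\cdot}-W_{t_k}$. Since the sets $A_j$ are countable and disjoint, $\hat\alpha$ is progressively measurable and bounded provided we use a common bound, which we enforce by truncating the admissible class at level $M$ and letting $M\to\infty$. Integrability of the costs comes from the bound $\|\rho\|_2\le 1$. This yields
\[
\mathsf V(t,\varrho)\le\mathcal{S}(t,\varrho,\hat\alpha)\le\mathbb{E}\Big[\int_t^\tau\mathrm{C}\,ds+\mathsf V(\tau,\rho_\tau)\Big]+\varepsilon+C\delta .
\]
Letting $\delta,\varepsilon\to0$ and taking the infimum over $\alpha$ concludes the proof.

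The genuinely delicate points are the boundedness convention for $\mathcal{U}$ in the pasting and the justification of the conditioning identity in Step 2. We expect both to be handled by the canonical-space construction.
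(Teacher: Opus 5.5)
\paragraph*{Comparison with the paper's proof}
Your proposal is correct in outline and has the same skeleton as the paper's proof. For ``$\ge$'' you split the cost at $\tau$ and bound the tail below by $\mathsf{V}(\tau,\rho^t_\tau)$; for ``$\le$'' you concatenate $\alpha$ on $[t,\tau)$ with an $\varepsilon$-optimal control. What differs is how the two steps are justified.

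The paper's proof is the formal textbook sketch. Its tail bound is asserted ``by definition of $\mathsf{V}$''. Since the control after $\tau$ depends on the past, this really requires the conditioning identity of your Step~2. The paper also picks $\alpha^{\varepsilon}$ pointwise at the random state $\rho^t_\tau$. That silently presupposes a measurable selection $\omega\mapsto\alpha^{\varepsilon}(\omega)$ and measurability of $\mathsf{V}(\tau,\rho^t_\tau)$, and the flow property is taken for granted.

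You replace the selection by uniform continuity of the cost in the initial state, combined with a countable Borel partition of the separable set $\mathcal{S}(\mathbb{H})\subset\mathcal{B}_2(\mathbb{H})$. You then reduce to finitely-valued stopping times and derive well-posedness, the flow property and stability from Lemma~\ref{cond-lipch}. This is where the Hilbert--Schmidt embedding genuinely enters the DPP; the paper's proof never uses it. Your by-product, continuity of the value function, is what the viscosity-solution programme mentioned after the HJB derivation would need. The price is a regularity lemma, a canonical-space set-up, and a uniformly bounded control class.

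\paragraph*{Point to tighten}
That last requirement is where your write-up needs work. In Step~1 you claim constants uniform in $\alpha\in\mathcal{U}$ ``since controls are bounded'', but the elements of $\mathcal{U}$ have no common bound.

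\begin{itemize}
\item The $\varrho$-stability of the state is in fact control-uniform, for a different reason. The map $-\mathrm{i}[\tilde{H}+\beta\widehat{H},\cdot]$ is skew-adjoint on $\mathcal{B}_2(\mathbb{H})$, so it drops out of the It\^o expansion of $\|\rho_s-\rho'_s\|_2^2$.
\item The time-increment bound $\mathbb{E}\|\rho^{t,\varrho}_s-\varrho\|_2^2\le C|s-t|$, by contrast, has $C$ growing with $\|\alpha\|_\infty$.
\item For the running cost $\beta^2\langle\varrho,C_2\rangle$, the Lipschitz constant of $\mathcal{S}(t,\cdot,\alpha)$ grows like $\|\alpha\|_\infty^2$.
\end{itemize}
Consequently, the constant in your $(\varepsilon+C\delta)$ transfer is not uniform over the countably many $\alpha^{k,j}$, nor as $\delta\to0$, since $\alpha^{k,j}$ is chosen after $\delta$. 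Your Lipschitz and H\"older claims for the untruncated $\mathsf{V}$ are therefore not justified.

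The fix is to impose the truncation from Step~1 onward, not only at the pasting:
\begin{enumerate}
\item Prove Steps 1--3 for $\mathcal{U}_M$ and $\mathsf{V}_M$.
\item Get time continuity of $\mathsf{V}_M$ from the DPP at deterministic times, which your partition argument yields using only spatial continuity. Do this before passing to general $\tau$.
\item Let $M\to\infty$, using $\mathsf{V}_M\downarrow\mathsf{V}$ pointwise. For ``$\le$'', fix $\alpha\in\mathcal{U}_{M_0}$ and apply monotone convergence to $\mathbb{E}[\mathsf{V}_M(\tau,\rho^{\alpha}_\tau)]$ for $M\ge M_0$; this is legitimate because $\mathsf{V}_{M_0}$ is bounded. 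For ``$\ge$'', use $\mathsf{V}_M\ge\mathsf{V}$ and $\mathcal{U}_M\subset\mathcal{U}$.
\end{enumerate}
The measurability of $\mathsf{V}(\tau,\rho_\tau)$ needed in Step~2 then follows because $\mathsf{V}$ is upper semicontinuous, as a decreasing limit of continuous functions. It should not be drawn from Lipschitz continuity of $\mathsf{V}$ itself.
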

\begin{proof}
For any admissible control $\alpha\in\mathcal{U}$ and any $\tau\in[t,T]$,
{\small
\begin{align*}
    \mathcal{S}(t,\varrho,\alpha)
    &= \mathbb{E}\bigg[\int_{t}^{\tau}
       \mathrm{C}\big(\rho_s^{t},\alpha_s\big)\mathrm{d}s\bigg]
     \! + \! \mathbb{E}\bigg[\int_{\tau}^{T}
       \mathrm{C}\big(\rho_s^{t},\alpha_s\big)\mathrm{d}s
       + \mathrm{F}\big(\rho_T^{t}\big) \bigg]\\
    &\geq \mathbb{E}\bigg[\int_{t}^{\tau}
       \mathrm{C}\big(\rho_s^{t},\alpha_s\big)\mathrm{d}s
       + \mathsf{V}\big(\tau,\rho_{\tau}^{t}\big)\bigg],
\end{align*}}
since the second expectation is bounded below by
$\mathsf{V}(\tau,\rho_{\tau}^{t})$ by definition of
$\mathsf{V}$. Taking the infimum over $\alpha\in\mathcal{U}$ yields
{\small
\begin{align*}
    \mathsf{V}(t,\varrho)
    = \inf_{\alpha\in\mathcal{U}} \mathcal{S}(t,\varrho,\alpha) \geq \inf_{\alpha\in\mathcal{U}}
    \mathbb{E}\bigg[\int_{t}^{\tau}
    \mathrm{C}\big(\rho_s^{t},\alpha_s\big)\mathrm{d}s
    + \mathsf{V}\big(\tau,\rho_{\tau}^{t}\big)\bigg].
\end{align*}}

For the reverse inequality, fix $\varepsilon>0$ and $\alpha\in\mathcal{U}$.
By definition of $\mathsf{V}$, there exists a control
$\alpha^{\varepsilon} \in \mathcal{U}$ such that
{\small
\begin{align*}
\mathcal{S}\big(\tau,\rho_{\tau}^{t},\alpha^{\varepsilon}\big)
    \leq \mathsf{V}\big(\tau,\rho_{\tau}^{t}\big) + \varepsilon.
\end{align*}}
Define the concatenated control
{\small
\begin{align*}
    \tilde{\alpha}_s^{\varepsilon}
    := \alpha_s\mathbb{I}_{[t,\tau)}(s)
     + \alpha_s^{\varepsilon}\mathbb{I}_{[\tau,T]}(s),
     \quad s \in [0,T].
\end{align*}}
Then, using the flow property of the controlled dynamics,
{\small
\begin{align*}
    \mathcal{S}(t,\varrho,\tilde{\alpha}^{\varepsilon})
    &= \mathbb{E}\bigg[\int_{t}^{\tau}
       \mathrm{C}\big(\rho_s^{t},\alpha_s\big)\mathrm{d}s\bigg]
     + \mathbb{E}\bigg[\mathcal{S}\big(\tau,\rho_{\tau}^{t},
       \alpha^{\varepsilon}\big)\bigg]\\
    &\leq \mathbb{E}\bigg[\int_{t}^{\tau}
       \mathrm{C}\big(\rho_s^{t},\alpha_s\big)\,\mathrm{d}s
       + \mathsf{V}\big(\tau,\rho_{\tau}^{t}\big)\bigg]
       + \varepsilon,
\end{align*}}
and therefore,
{\small
$
\mathsf{V}(t,\varrho) \leq  \mathbb{E}\bigg[\int_{t}^{\tau}
       \mathrm{C}\big(\rho_s^{t},\alpha_s\big)\,\mathrm{d}s
       + \mathsf{V}\big(\tau,\rho_{\tau}^{t}\big)\bigg]
       + \varepsilon.
$}

Taking the infimum over all $\alpha\in\mathcal{U}$ and letting
$\varepsilon\to 0$ yields the desired equality.
\end{proof}
\paragraph*{Hamilton-Jacobi-Bellman equation}
Thanks to the  DPP, we can now derive the HJB equation formally. For $\varepsilon$ small, thanks to Lemma \ref{cond-lipch} and applying Itô's formula to the process $\mathsf{V}(s, \rho_s)$ between $t$ and $t+\varepsilon$ yields:
{\small
\begin{align*}
    \mathsf{V}(t,\varrho) &=\inf_{\alpha \in \mathcal{U}}\mathbb{E}\Big[\int_{t}^{t+\varepsilon}\mathrm{C}(\rho_s,\alpha_s)\mathrm{d}s + \mathsf{V}(t+\varepsilon,\rho_{t+\varepsilon})\Big]\\
    &= \inf_{\alpha \in \mathcal{U}}\mathbb{E}\bigg[\int_{t}^{t+\varepsilon} \mathrm{C}(\rho_s, \alpha_s)\mathrm{d}s + \mathsf{V}(t,\varrho) \\
    &\quad + \int_{t}^{t+\varepsilon}\Big(\frac{\partial \mathsf{V}}{\partial t}(s, \rho_s) + \mathsf{D}(\rho_s,\alpha_s)[\mathsf{V}(s,\cdot)]\Big)\mathrm{d}s\bigg].
\end{align*}}
Subtracting $\mathsf{V}(t,\varrho)$ from both sides, dividing by $\varepsilon$ and letting $\varepsilon \to 0^{}$, gives the HJB equation
{\small
\begin{align*}
   0 &= \inf_{\alpha \in \mathcal{U}}\bigg\{ \mathrm{C}(\varrho,\alpha) + \frac{\partial \mathsf{V}}{\partial t}(t,\varrho) + \mathrm{D}(\varrho,\alpha)[\mathsf{V}(t,\cdot)] \bigg\}.
\end{align*}}
Rearranging the terms leads to the standard form:
{\small
\begin{align*}
   -\frac{\partial\mathsf{V}}{\partial t}(t,\varrho)   &= \inf_{\alpha \in \mathcal{U}}\big\{ \mathrm{C}(\varrho,\alpha) + \mathrm{D}(\varrho,\alpha)[\mathsf{V}(t,\varrho)] \big\}.
\end{align*}}
The equation is solved backward in time subject to the terminal condition: {\small
$    \mathsf{V}(T,\varrho) = \mathrm{F}(\varrho).
$}

The formal derivation of the HJB equation above suggests that a smooth solution, should it exist, coincides with the value function. Consequently, the next natural step is to establish a verification theorem, which ensures that a candidate solution to the HJB equation indeed matches the value function. However, in many practical scenarios, the HJB equation may not admit a classical solution (i.e., one with sufficient regularity). To address this, one can adopt a weaker notion of solution known as a viscosity solution, which provides a rigorous framework for handling non-smooth solutions to partial differential equations. The full development of these concepts is left for future work.

\section{N-Body and Mean-field dynamics}
\label{sec:nbody}
Having reviewed the theory of quantum filtering and feedback control, we now introduce the
$N$-body formalism and its mean-field limits. This provides the foundation for quantum
differential games and mean-field quantum control. In this setting, the filtering framework
extends naturally to composite systems whose Hilbert space consists of $N$ identical
quantum subsystems:
{\small \begin{align*}
    \mathbb{H}^{N} &:= \mathbb{H}\otimes\dots\otimes\mathbb{H},
\end{align*}}
where $\mathbb{H} = L^2_{\mathbb{C}}(\mathfrak{X})$, and the environment (bath) is modeled by $N$-quantum fields. The conditional state for $N$-Body system follows quantum filtering equation:
{\small
\begin{align}\label{Nbelavkin}
\mathrm{d}\boldsymbol{\rho}_t^{N} &= -\mathrm{i}{[\mathbf{H}^{N},\boldsymbol{\rho}_{t}^{N}]}\mathrm{d}t
+ \sum_{l = 1}^{N} \Big({\bf L}_l\boldsymbol{\rho}_{t}^{N}{\bf L}^{\dagger}_{l} - \frac{1}{2}\big\{{{\bf L}}^{\dagger}_{l}{\bf L}_{l},\boldsymbol{\rho}_{t}^{N}\big\}\Big)\mathrm{d}t\nonumber\\
&\; +\sum_{l = 1}^{N}\Big({\bf L}_l\boldsymbol{\rho}_{t}^{N} + \boldsymbol{\rho}_{t}^{N}{\bf L}^{\dagger}_l -\mathrm{tr}\big(({\bf L}_l + {\bf L}_l^{\dagger})\boldsymbol{\rho}_{t}^{N}\big)\boldsymbol{\rho}_{t}^{N}\Big)\mathrm{d}W_t^{l}.
\end{align}}

The output process for each sub-systems is then:
{\small \begin{align*}
    \mathrm{d}Y_t^{l} = \mathrm{d}W_t^{l} + \mathrm{tr}\big((\mathbf{L}_{l}+\mathbf{L}_{l}^{\dagger})\boldsymbol{\rho}_{t}^{N}\big)\mathrm{d}t, \;\; l \in [N].
\end{align*}}

In the case of  pairwise interactions between  sub-systems, the Hamiltonian part has the following structure: {\small $
\mathbf{H}^{N} = \sum_{l=1}^{N}\tilde{\mathbf{H}}_{l}+ 
\frac{1}{N} \sum_{l > l'}^{N} \mathbf{A}_{ll'}.$}

The pairwise interaction Hamiltonian is an operator 
on $L^2_{\mathbb{C}}(\mathfrak{X}^2,\mu^{\otimes 2})$, 
of Hilbert--Schmidt type with kernel $a$. This kernel satisfies:
{\small$$
a(x,y;x',y') = a(y,x;y',x'), 
\; 
a(x,y;x',y') = \overline{a(x',y';x,y)}.
$$}

The corresponding operator $A$ acts on 
$L^2_{\mathbb{C}}(\mathfrak{X}^2,\mu^{\otimes 2})$ 
as follows: for all $f \in L^2_{\mathbb{C}}(\mathfrak{X}^2,\mu^{\otimes 2}), $ we have
{\small\begin{align*}
\bigl(Af\bigr)(x,y) 
:= 
\int_{\mathfrak{X}^2} 
a(x,y;x',y')f(x',y') 
\mu(\mathrm{d}x')\mu(\mathrm{d}y').
\end{align*}}

Everything is now in place to construct an \emph{$N$-quantum differential game}.  
We attach to each particle $l \in [N]$ a distinct \emph{agent} responsible for feedback control of that particle.  
Agent $l$ observes its \emph{local state} 
obtained by tracing out all other particles from the global state $\boldsymbol{\rho}_t^N$. Based on this information, agent $l$ applies a feedback control $\alpha_l \in \mathcal{U}$ that depends on its local state  $\alpha_{t,l} := \alpha_{l}(\rho^{l}_t)$.

Hence, the total Hamiltonian becomes time-dependent:
{\small \begin{align*}
\mathbf{H}^{N}_t = \sum_{l=1}^{N}\tilde{\mathbf{H}}_{l}+ 
\frac{1}{N} \sum_{l > l'}^{N} \mathbf{A}_{ll'} + \sum_{l=1}^{N}\alpha_{t,l}\widehat{\mathbf{H}}_{l}.
\end{align*}}
Each player $l \in [N]$ seeks to minimize its individual cost functional 
{\small \begin{align*}\mathcal{J}_{l}\big(\alpha_l,\boldsymbol{\alpha}_{-l}\big) &= \mathbb{E}\Big[\int_{0}^{T} \mathrm{C}_l(\boldsymbol{\rho}^{N}_{s},\alpha_{s,l})\mathrm{d}s + \mathrm{F}_l(\boldsymbol{\rho}^{N}_{T})\Big],
\end{align*}}%
where $\boldsymbol{\alpha}_{-l}$ denotes the controls of all other players.
An optimal set of control policy leads to an $N$-player quantum differential games. As in the classical case, and even more pronounced in the quantum setting due to the multiplication of dimensions, complexity grows rapidly and the system quickly becomes unsolvable as $N$ increases. However, due to the complete symmetry of interactions, if we let $N \to \infty$, the interaction weakens in the sense that the influence of each particle, which is of order $1/N$, goes to $0$. As a result, we expect the interactions to vanish and the sub-systems become asymptotically uncorrelated.  This is the idea behind the mean-field limit; the mathematical justification of the mean-field limit is done through the notion of propagation of chaos. Basically, the statement says that if the system is initially in a chaotic (i.e., tensor-product) state, it remains so at later times. More precisely,
{\small \begin{align*}
    \boldsymbol{\rho}_{0}^{N} = \rho_{0}^{\otimes N} \xrightarrow[]{\text{Propagation of chaos}} \boldsymbol{\rho}_{t}^{N}  \approx \bigotimes_{l=1}^{N} \gamma_{t,l}. 
\end{align*}}

In this limit, a generic subsystem evolves according to the mean-field quantum filtering equation
{\small
\begin{align}\label{mfbelavkin}
    \mathrm{d}\gamma_{t} &= -\mathrm{i}[\tilde{H} + A^{\mathbb{E}[\gamma_{t}]},\gamma_{t}]\mathrm{d}t + \big(L\gamma_{t}L^{\dagger} - \frac{1}{2}\{L^{\dagger}L,\gamma_{t}\}\big)\mathrm{d}t\nonumber\\ 
    &\quad + \Big(L\gamma_{t} + \gamma_{t}L^{\dagger} - \mathrm{tr}\big((L+L^{\dagger})\gamma_{t}\big)\gamma_{t}\Big)\mathrm{d}W_t,
\end{align}}
with the initial state $\gamma_{0} = \rho_{0} $. The corresponding mean-field operator $A^{\bullet}$ is the linear map
{\small \begin{align}\label{mfoperator}
A^{\bullet} &: L^{2}_{\mathbb{C}}(\mathfrak{X}\times\mathfrak{X},\mu^{\otimes 2}) \to L^{2}_{\mathbb{C}}(\mathfrak{X}\times\mathfrak{X},\mu^{\otimes 2}),\nonumber\\
A^{\rho}(x,x') &= \int_{\mathfrak{X}^{2}}a(x,y;x',y')\overline{\rho(y,y')}\mu(\mathrm{d}y)\mu(\mathrm{d}y'),
\end{align}}
and the output process for the mean-field particle is 
{\small \begin{align}\label{recordprocess}
    \mathrm{d}Y_t^{} = \mathrm{d}W_t^{} +\mathrm{tr}\big(({L}_{}+{L}_{}^{\dagger}){\gamma}_{t}\big)\mathrm{d}t.
\end{align}}

\begin{remark}
The derivation of \eqref{mfbelavkin}, from the $N$-body dynamics was first carried out rigorously in \cite{kolokoltsov21law,kolokoltsov22qmfg}; see also \cite{kolokoltsov25quantumstoeq,chalal23mean,kolokoltsov25mathematical} for questions related to well-posedness.  
\end{remark}

Coupling this limit system with an optimal control policy leads to a quantum mean-field feedback control problem.

Now that the formalism has been established, the next section presents a concrete example of quantum state preparation that can be viewed as a differential game, and for which the mean-field approximation provides an effective method of dimensionality reduction.

\section{Game of Stabilization}
\label{sec:game}

Stabilization is one of the central tasks in control engineering. Given a desired equilibrium point\footnote{By an equilibrium point we mean a state of the system that is invariant under the dynamics: if the system is initialized at this state, it remains there for all future times.}, the goal is to design a feedback control that makes this equilibrium asymptotically stable for the controlled system; that is, solutions starting close to the equilibrium remain close, and all solutions eventually converge to the equilibrium.

It is well known that optimal control methods can be used for the design of asymptotically stabilizing controls by choosing the objective in such a way that it penalizes states away from the desired equilibrium. Suppose the system starts from an initial state $\rho_{0} \in \mathcal{S}(\mathbb{H})$ and we wish it to reach, at some terminal time $T$, a target state $\rho^{\star} \in \mathcal{S}(\mathbb{H})$. 
 To model this, we assign to each state $\rho \in \mathcal{S}(\mathbb{H})$ a terminal cost $\mathrm{F}(\rho) \in \mathbb{R}$, which is small when $\rho$ is close to the target and large otherwise, so that the controller is incentivized to minimize $\mathrm{F}(\rho_{T})$. If one only considers controllability, any control $\alpha \in \mathcal{U}$ that steers $\rho_{0}$ toward $\rho^{\star}$ would be acceptable. However, in many situations it is more appropriate to penalize not only the terminal deviation but also the “transportation cost’’ incurred along the trajectory. This leads to the following cost functional:
{\small \begin{align*}\mathcal{J}(\alpha) = \mathbb{E}\Big[\int_{0}^{T}\mathrm{C}(\rho_s,\alpha_s)\mathrm{d}s + \mathrm{F}(\rho_{T})  \Big]. \end{align*}}

The question of quantum state preparation and quantum state stabilization is an active research area in quantum control that has been investigated in several papers   \cite{stockton04deter,liang19exponential,amini24exponential}. It consists in designing control laws that steer the quantum state toward a desired target.  In the following,  we consider the case of multiple controllers (agents), each with their own objective. We  illustrate how this setting can be interpreted as a quantum differential game, and then show exponential stabilization in the  mean-field setting.

In the following we consider $N$ qubits with pairwise Ising interactions. The configuration space is {\small $\mathfrak{X} = \{1,2\}$}, endowed with the counting measure, such that $\mathbb{H} := L^{2}_{\mathbb{C}}(\mathfrak{X}) \cong \mathbb{C}^2$. Each agent $l \in [N]$ continuously monitors its qubit along the $z$-axis and applies a local Hamiltonian feedback along the $x$-axis:
{\small\begin{align*}
A &=  {\sigma}_{z}\otimes{\sigma}_{z}, 
\quad L = {\sigma}_{z},\quad  \widehat{H} = \sigma_{x},\quad \tilde{H} = 0,  \quad l \in [N],\\
\mathrm{d}\boldsymbol{\rho}_t^{N}
&= -\mathrm{i}[\tfrac{1}{N}\sum_{1 \leq l < l' \leq N}\mathbf{A}_{ll'} + \sum_{l=1}^{N}\alpha_{t,l}\widehat{\mathbf{H}}_{l},\boldsymbol{\rho}^{N}_t]\mathrm{d}t\\
&\quad+ \sum_{l=1}^{N}\big(\boldsymbol{\sigma}_{z,l}\boldsymbol{\rho}_t\boldsymbol{\sigma}_{z,l}-\boldsymbol{\rho}_t\big)\mathrm{d}t\\
&\quad+ \sum_{l=1}^{N}\Big(\boldsymbol{\sigma}_{z,l}\boldsymbol{\rho}_t+\boldsymbol{\rho}_t\boldsymbol{\sigma}_{z,l}
 - 2\mathrm{tr}(\boldsymbol{\sigma}_{z,l}\boldsymbol{\rho}_t)\boldsymbol{\rho}_t\Big)\mathrm{d}W_t^{l},
\end{align*}}
with $\boldsymbol{\rho}_0\in\mathcal{S}(\mathbb{C}^{2^N})$.

\subsection{Mean-field setting}
If the agents share a common objective, then—due to the complete symmetry of the model—a mean-field formulation is natural.  
In this case the mean-field operator is $A^{\bullet} = \mathrm{tr}(\bullet\sigma_z)\sigma_z$. Indeed, for the Ising coupling operator $A = \sigma_z\otimes\sigma_z$, its associated kernel can be identified directly as
{\small
\begin{align*}a &= \small
{\begin{pmatrix}
        a(1,1;1,1) & a(1,1;2,1) & a(1,1;1,2) & a(1,1;2,2)\\
        a(2,1;1,1) & a(2,1;2,1) & a(2,1;1,2) & a(2,1;2,2)\\
        a(1,2;1,1) & a(1,2;2,1) & a(1,2;1,2) & a(1,2;2,2)\\
        a(2,2;1,1) & a(2,2;2,1) & a(2,2;1,2) & a(2,2;2,2)\\
    \end{pmatrix}}\\
    &= {\small \begin{pmatrix}
        1 & 0 & 0 & 0\\
        0 & -1 & 0 & 0\\
        0 & 0 & -1 & 0\\
        0 & 0 & 0 & 1\\
    \end{pmatrix}}
\end{align*}}
Then using \eqref{mfoperator}, {\small $
 A^{\rho}(x,x') = \sum_{(y,y')\in \mathfrak{X}^2}a(x,y;x',y')\overline{\rho(y,y')}
$}
such that, 
{\small
\begin{align*}
    A^{\rho}(1,1) &= \sum_{(y,y')\in \mathfrak{X}^2}a(1,y;1,y')\overline{\rho(y,y')},\\
    A^{\rho}(1,2) &= \sum_{(y,y')\in \mathfrak{X}^2}a(1,y;2,y')\overline{\rho(y,y')},\\
    A^{\rho}(2,1) &= \sum_{(y,y')\in \mathfrak{X}^2}a(2,y;1,y')\overline{\rho(y,y')},\\
    A^{\rho}(2,2) &= \sum_{(y,y')\in \mathfrak{X}^2}a(2,y;2,y')\overline{\rho(y,y')}.
\end{align*}}
In particular,
{\small \begin{align*}
A^{\rho}(1,1) &= \underbrace{a(1,1;1,1)}_{1}\overline{\rho(1,1)} + \underbrace{a(1,2;1,1)}_{0}\overline{\rho(2,1)} \\ & \quad + \underbrace{a(1,1;1,2)}_{0}\overline{\rho(1,2)} + \underbrace{a(1,2;1,2)}_{-1}\overline{\rho(2,2)}\\
&= \rho(1,1) - \rho(2,2).
\end{align*}}
Writing in Pauli parametrization, we get {\small $
A^{\rho}(1,1) = z $}, {\small $
    A^{\rho}(2,1) = 0, \; A^{\rho}(1,2) = 0, \; A^{\rho}(2,2) = -z.$}
Thus,
{\small \begin{align*}
A^{\rho} &= \begin{pmatrix}
    z & 0 \\ 0 & -z
\end{pmatrix} = z\sigma_{z} = \mathrm{tr}(\rho \sigma_{z})\sigma_z.
\end{align*}}

Then the dynamics is written as follow:
{\small\begin{align}\label{eq:mf-ising}
\mathrm{d}\gamma_t 
&= -\mathrm{i}\Big[\mathrm{tr}(\mathbb{E}[\gamma_{t}]\sigma_{z}) {\sigma}_z + \alpha_t\sigma_x,\gamma_t\Big]\mathrm{d}t
 + \big({\sigma}_z\gamma_t{\sigma}_z-\gamma_t\big)\mathrm{d}t\nonumber\\
&\quad + \big({\sigma}_z\gamma_t+\gamma_t{\sigma}_z - 2\mathrm{tr}({\sigma}_z\gamma_t)\gamma_t\big)\mathrm{d}W_t.
\end{align}}

In Pauli basis, the Bloch sphere components satisfy
{\small\begin{align}\label{eq:isingeq}
\mathrm{d}x_t &= \big(-2\mathbb{E}[z_t]y_t - 2x_t\big)\mathrm{d}t - 2z_tx_t\mathrm{d}W_t\nonumber\\
\mathrm{d}y_t &= \big(2\mathbb{E}[z_t]x_t - 2y_t - 2\alpha_t z_t\big)\mathrm{d}t - 2z_ty_t\mathrm{d}W_t\\
\mathrm{d}z_t &= \big(2\alpha_t y_t\big)\mathrm{d}t + 2(1-z_t^{2})\mathrm{d}W_t.\nonumber
\end{align}}

\begin{remark}
The presence of the mean-field term makes the treatment of the equation more delicate than the traditional case of a single controlled qubit in the literature. In the latter case, the variable $x_t$ has no dynamics as long as $x_0 = 0$; such that we can restrict  the analysis to the pair $(y_t,z_t)$ and stabilize the target point $(y,z) = (0,-1)$. The geometry of state space will guarantee the stabilization of the system. In the mean-field case, the fundamental feature of quantum state reduction remains valid, which is not surprising since the mean-field operator, although state-dependent, still commutes with the coupling operator $\sigma_z$. However, proving that an adequate control input stabilizes the desired eigenstate of the coupling operator requires care, because standard stochastic control lemmas are not stated for McKean-Vlasov SDEs. To overcome this issue, we first replace the expectation dependence by a time-dependent function valued in the set of density operators, then show stabilization for all such parametrization and finally apply a  fixed point argument to conclude the stabilization property for the mean-field equation.  
\end{remark}

\begin{theorem}\label{thm:stab}
Consider the system \eqref{eq:isingeq}. Then the following asymptotic properties hold: 
\begin{itemize}

\item \textbf{Quantum State Reduction.} Under zero feedback control $\alpha \equiv 0$,  the process $(\gamma_t)_{t \geq 0}$ converges almost surely,
as $t\to\infty$, to one of the eigenstates of $\sigma_z$, namely
$\rho_e$ or $\rho_g$.

\item \textbf{Stabilization. }  Let $\varrho^{\star} \in \{\rho_g,\rho_e\}$ with the feedback control
{\small \begin{align*}
    \alpha^{\star}_t &= \kappa_{1}\big( 1 - \mathrm{tr}(\gamma_{t}\varrho^{\star})\big) - \mathrm{i}\kappa_{2}\mathrm{tr}([\sigma_y,\gamma_t]\varrho^{\star}),
\end{align*}}
where $\kappa_{1} > 0, \kappa_{2} \geq 0.$ Then the state $\gamma_t$ converges exponentially toward $\varrho^{\star}$.
\end{itemize}
\end{theorem}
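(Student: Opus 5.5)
The plan follows the strategy outlined in the remark preceding Theorem~\ref{thm:stab}. First, replace the McKean--Vlasov term $\mathbb{E}[z_t]$ in \eqref{eq:isingeq} by an arbitrary deterministic measurable function $m:[0,\infty)\to[-1,1]$. This gives a family of ordinary (Markov-type, time-inhomogeneous) SDEs on the Bloch ball. Second, prove both asymptotic properties uniformly in $m$. Third, recover the mean-field equation by a fixed-point argument. Throughout I use that positivity of $\gamma_t$ is preserved, so that $x_t^2+y_t^2+z_t^2\le 1$. In particular $|x_t|,|y_t|\le\sqrt{1-z_t^2}$, so $z_t\to\pm1$ forces $\gamma_t\to\rho_e$ or $\rho_g$.

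\emph{Quantum state reduction.} When $\alpha\equiv 0$, the $z$-equation decouples completely from $x,y$ and from $m$: $\mathrm{d}z_t=2(1-z_t^2)\mathrm{d}W_t$. In particular $\mathbb{E}[z_t]=\mathbb{E}[z_0]$, so no fixed point is even needed here. Since $z_t$ is a bounded martingale, it converges a.s. to some $z_\infty$. Applying It\^o's formula to $1-z_t^2$ gives
\begin{align*}
\mathrm{d}(1-z_t^2) = -4(1-z_t^2)^2\,\mathrm{d}t + \mathrm{d}M_t
\end{align*}
for a martingale $M$. Hence $\mathbb{E}\int_0^\infty (1-z_s^2)^2\,\mathrm{d}s \le 1$, and therefore $1-z_\infty^2=0$ a.s. The bound $|x|,|y|\le\sqrt{1-z^2}$ then gives $\gamma_t\to\rho_e$ or $\rho_g$ almost surely.

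\emph{Stabilization.} Take $\varrho^\star=\rho_g$; the case $\rho_e$ is symmetric under $z\mapsto -z$, $y\mapsto -y$. Using $[\sigma_y,\sigma_x]=-2\mathrm{i}\sigma_z$ and $[\sigma_y,\sigma_z]=2\mathrm{i}\sigma_x$, the feedback reads
\begin{align*}
\alpha^\star_t=\tfrac{\kappa_1}{2}(1+z_t)+\kappa_2 x_t .
\end{align*}
I will use the Lyapunov candidate $V(\gamma)=\sqrt{1+z}=\sqrt{2(1-\mathrm{tr}(\gamma\rho_g))}$. A direct computation gives
\begin{align*}
\mathcal{A}V = \frac{\alpha^\star y}{\sqrt{1+z}} - \frac{(1-z)^2}{2}\sqrt{1+z}.
\end{align*}
Near $z=-1$ the second term is about $-2V$. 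Using $|x|,|y|\lesssim\sqrt{1+z}$, the $\kappa_1$ part of the first term is $O((1+z)^{3/2})$. The $\kappa_2 xy/\sqrt{1+z}$ part is $O(\sqrt{1+z})$, which is of the same order as the dissipative term. It must therefore be controlled, either by a smallness condition on $\kappa_2$ or by adding a correction term (for instance a multiple of $y$ or $x^2+y^2$) to $V$ to absorb it.

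This yields local exponential stability: $\mathcal{A}V\le -cV$ on a neighbourhood of $\rho_g$. The global part consists of two steps. (i) The set $\{z=1\}$ is not reached, and $\rho_e$ is not invariant under the controlled dynamics, since there $\alpha^\star=\kappa_1$ rotates the state. (ii) The process enters any neighbourhood of $\rho_g$ in finite time a.s., by a stochastic LaSalle argument: the only invariant subset of $\{\mathcal{A}V=0\}$ is $\{\rho_g\}$. Combining (i) and (ii) with the strong Markov property and the local estimate gives $\limsup_t \frac1t\log V(\gamma_t)<0$ a.s. and $\mathbb{E}[V(\gamma_t)]\le Ce^{-ct}$, with constants independent of $m$.

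\emph{Fixed point and main obstacle.} Define $\Phi(m)(t)=\mathbb{E}[z^m_t]$ on $C([0,\infty),[-1,1])$, equipped with a weighted sup-norm $\sup_t e^{-\lambda t}|m(t)|$. The coefficients are Lipschitz in $(x,y,z,m)$, and $m$ enters only linearly through $\mp 2m(\cdot)$ multiplying $y$ and $x$. A Gr\"onwall estimate then shows $\Phi$ is a contraction for $\lambda$ large, so its unique fixed point solves \eqref{eq:isingeq}, and the $m$-uniform bounds transfer to it. The step I expect to be hardest is the global part of the stabilization: the $\kappa_2$ cross term and the sign-indefinite drift $2\alpha y$ must be controlled uniformly in the arbitrary rotation $m(t)$ in the $(x,y)$ plane. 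I would first try to exploit that this rotation preserves $x^2+y^2$, so that the Lyapunov analysis depends on $m$ only through the angle of $(x,y)$, which the estimate can be made insensitive to.
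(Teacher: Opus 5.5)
Your architecture is the paper's. You freeze the mean-field term, use $V=\sqrt{1-\mathrm{tr}(\gamma\varrho^\star)}$ (yours is $\sqrt2$ times it), and close with a Picard-type fixed point. Your weighted norm on $[0,\infty)$ is a valid, arguably better-suited, substitute for the paper's iterated contraction on $[0,T]$. Your state-reduction argument is correct and more complete than the paper's: the paper never explains why the martingale limit lies in $\{-1,1\}$, and your It\^o identity for $1-z_t^2$ supplies exactly that.

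\textbf{The gap: the $\kappa_2$ cross term.} The gap is in stabilization, precisely at the term you flagged, and neither remedy you offer proves the statement. A smallness condition (your bound needs roughly $\kappa_2<2$) changes the theorem. No modification of $V$ can work for all $\kappa_2\ge0$ with the law as written. To see this, take $\varrho^\star=\rho_g$ and the admissible frozen flow $m\equiv-1$, i.e.\ $\xi\equiv\rho_g$; this is also the limit of $\mathbb{E}[z_t]$ if the mean-field solution converges to $\rho_g$. Linearizing at $\rho_g$, where $\alpha^\star\approx\kappa_2x$, the vector $v=(x,y)^\top$ obeys
\[
\mathrm{d}v=Av\,\mathrm{d}t+2v\,\mathrm{d}W_t,\qquad A=\begin{pmatrix}-2&2\\ 2\kappa_2-2&-2\end{pmatrix}.
\]
Hence $v_t=e^{2W_t-2t}e^{At}v_0$, and $\frac{1}{t}\log|v_t|\to-4+2\sqrt{\kappa_2-1}$ for generic $v_0$. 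This exponent is positive for $\kappa_2>5$. On pure states $V\approx|v|/\sqrt2$, so $\rho_g$ is then unstable. Relatedly, the map $z\mapsto-z$, $y\mapsto-y$ sends the stated $\rho_e$-problem to the $\rho_g$-problem with $\kappa_2\mapsto-\kappa_2$ and $m\mapsto-m$. So the two targets are not symmetric as written.

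\textbf{How the paper handles it, and the fix.} The paper never computes $\mathcal{A}V$; it asserts that the hypotheses of the cited Theorem 6.3 of Liang et al.\ hold for every frozen flow. The instability above shows this cannot be right for the law as literally stated. Feedbacks of that type, and the paper's own two-player law, take the commutator with the actuator $\widehat{H}=\sigma_x$, giving $-\mathrm{i}\kappa_2\mathrm{tr}([\sigma_x,\gamma]\rho_g)=-\kappa_2y$. With that term, your computation becomes
$\mathcal{A}V=\frac{\kappa_1}{2}\,y\sqrt{1+z}-\frac{\kappa_2y^2}{\sqrt{1+z}}-\frac{(1-z)^2}{2}\sqrt{1+z}$.
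This is $\le-cV$ near $\rho_g$ for every $\kappa_2\ge0$, and uniformly in $m$, because $m$ does not enter the $z$-equation and $V$ depends on $z$ only. Note that the $\kappa_1$ part is $O(1+z)=O(V^2)$, not $O((1+z)^{3/2})$, which is still enough.

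\textbf{The global step also needs replacing.} Your step (ii) relies on stochastic LaSalle, which needs $\mathcal{A}V\le0$ (and, in its invariance form, an autonomous system). Here $m(t)$ is time-dependent, and $\mathcal{A}V>0$ on part of the ball: near $\rho_e$, with small $y>0$, the $\kappa_1$ term dominates. What you need instead is a lower bound, uniform in the initial state and in $m$, on the probability of entering a fixed neighbourhood of $\rho_g$ within a fixed time. Combine this with the Markov property of $(t,\gamma_t)$ and your local estimate. This is exactly the part the paper outsources to the citation, and uniformity in $m$ has to be supplied on either route.
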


{

\begin{proof}

We first prove quantum state reduction.  
When the control is turned off, it is sufficient to focus on the dynamics of $z_t$, which satisfies
$\mathrm{d}z_t = 2(1-z_t^2)\mathrm{d}W_t$.
Since $z_t$ is bounded and is a local martingale, Doob's martingale convergence  theorem implies that it converges a.s. to one of its equilibirium points, namely $\{-1,1\}$. The constraint $x^2 + y^2 + z^{2} \leq 1$ then ensures convergence of $x_t, y_t $ toward zeros, which shows that $\gamma_{t}$ converges a.s. toward $\{\rho_e,\rho_g\}$.

We now prove stabilization.  
Consider a flow function $\xi: [0,T] \to \mathcal{S}(\mathbb{H})$ 
and the corresponding dynamics
{\small\begin{align*}
\mathrm{d}\gamma_t^{\xi}
&= -\mathrm{i}\Big[\xi_z(t){\sigma}_z + \alpha_t\sigma_x,\gamma_t^{\xi}\Big]\mathrm{d}t
 + \big({\sigma}_z\gamma_t^{\xi}{\sigma}_z - \gamma_t^{\xi}\big)\mathrm{d}t\\
&\quad + \big({\sigma}_z\gamma_t^{\xi}+\gamma_t^{\xi}{\sigma}_z - 2\mathrm{tr}({\sigma}_z\gamma_t^{\xi})\gamma_t^{\xi}\big)\mathrm{d}W_t,
\end{align*}}
where $\xi_z(t) = \mathrm{tr}(\xi(t)\sigma_z)$. Consider the following Lyapunov function 
{\small\begin{align*}
    V(\gamma) = \sqrt{1 - \mathrm{tr}(\gamma\varrho^{\star})}.
\end{align*}}
One can check directly that conditions of \cite[Thereom 6.3]{liang19exponential}
are fulfilled. Hence, exponential stabilization is guaranteed for any admissible flow $\xi$. It remains to establish well-posedness of the mean-field controlled equation via a fixed-point argument. From the existence of the family of equations parametrized by $\xi$, we define the following mapping {\small $$\Xi:C\big([0,T], \mathcal{S}(\mathbb{H})\big)\to C\big([0,T], \mathcal{S}(\mathbb{H})\big)$$} by $\Xi(\xi):=(\mathbb{E}[\gamma_t^{\xi}])_{0 \leq t \leq T}$. Therefore the process $\gamma^m$ solves \eqref{eq:isingeq} if and only if $m = \Xi({m})$. To show that $\Xi$ admits a unique fixed point, observe that the coefficients of the SDE are locally Lipschitz (the feedback $\alpha^\star$ is linear).  
Using Itô's formula and standard estimates, for arbitrary elements $\xi^1,\xi^2$ we obtain
{\small
\begin{align*}
    \sup_{0 \leq r \leq t}\|\Xi(\xi^1)_r -  \Xi(\xi^2)_r\|_{2} \leq C\int_{0}^{t}\|\xi_s^{1} - \xi_{s}^{2}\|_{2}{\mathrm{d}s}, \quad \forall t \leq T.
\end{align*}}
Iterating the mapping yields, for any $k \ge 1$,
{\small
\begin{align*}\|\Xi^{(k)}(\xi^1)_t - \Xi^{(k)}(\xi^2)_t\|_{2} \leq  \frac{C^kt^{k-1}}{(k-1)!}\sup_{0 \leq r \leq t}\|\xi_s^{1} - \xi_{s}^{2}\|_{2}.
\end{align*}}
For $k$ sufficiently large, $\Xi^{(k)}$ becomes a contraction. By standard fixed-point arguments, $\Xi$ therefore admits a unique fixed point.
\end{proof}

\begin{remark}
The feedback law used above was first introduced in \cite{tsumura07} and later generalized in \cite{liang19exponential}. 
Its principle is to penalize deviations from the target state, while ensuring that the feedback vanishes once the target is reached, 
so that the desired state becomes an equilibrium of the closed-loop dynamics.
\end{remark}


Figure~\ref{fig:1-QSR} illustrates  quantum state reduction, while Figure ~\ref{fig:1-Stabilization} shows quantum state stabilization, both starting from the initial state ${\gamma}_{0} = \frac{1}{2}\mathbf{1}$.  

\begin{figure}[h!]
     \centering
  \includegraphics[width=\linewidth]{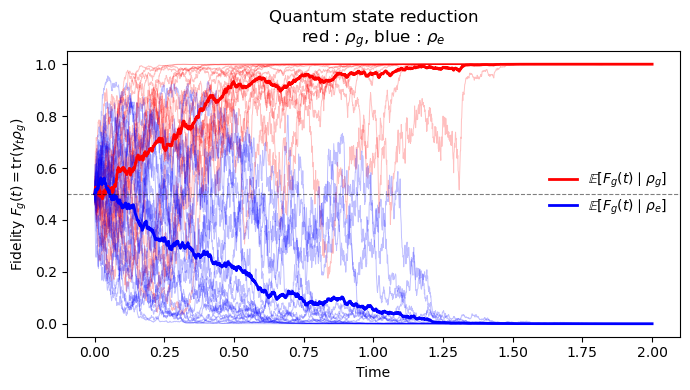}
  \caption{\footnotesize{Quantum state reduction: convergence of ${\gamma}_{t}$ toward $\{\rho_g,\rho_e\}$. }}
    \label{fig:1-QSR}
  \end{figure}

  \begin{figure}[h!]
     \centering
  \includegraphics[width=\linewidth]{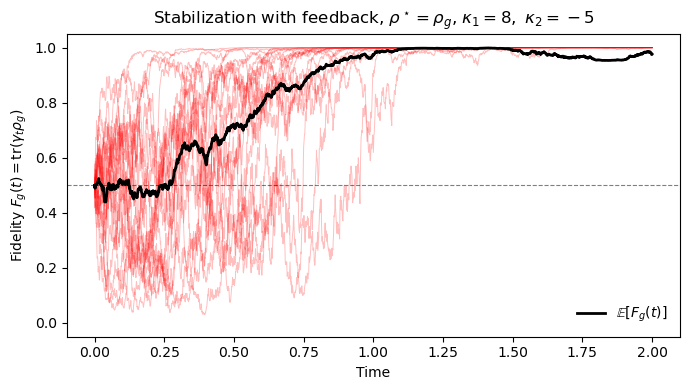}
\caption{\footnotesize{Stabilization of ${\gamma}_{t}$ toward ${\rho_{g}}$ when the control $\alpha^{\star}$ is applied.}}
    \label{fig:1-Stabilization}
  \end{figure}
}

\subsection{Two-player framework}
Returning to the case of a finite number of particles, we now take $N=2$ and illustrate a simple two-player quantum differential game.
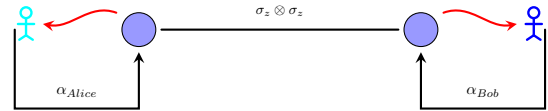
\begin{figure}
\begin{center}
\begin{tikzpicture}[scale=0.75, transform shape, every node/.style={scale=0.8}, >=stealth, line cap=round]

\definecolor{amethyst}{HTML}{9966CC}
\definecolor{amber}{HTML}{FFBF00}
\definecolor{aqua}{HTML}{00FFFF}

\tikzset{
stick/.pic={
\draw[pic actions, line width=0.9pt] (0,0) circle[radius=0.12];
\draw[pic actions, line width=0.9pt] (0,-0.12)--(0,-0.50);
\draw[pic actions, line width=0.9pt] (-0.18,-0.25)--(0.18,-0.25);
\draw[pic actions, line width=0.9pt] (-0.15,-0.60)--(0,-0.50)--(0.15,-0.60);
}
}

\draw[fill = blue!40] (0,-2) circle (0.3);
\draw[fill = blue!40] (5,-2) circle (0.3);

\draw[thick] (0.4,-2) -- (4.6,-2)
node[midway, above=4pt] {\small $\sigma_{z}\otimes\sigma_{z}$};

{\pic[draw=aqua] at (-2, -1.7) {stick};}
{\pic[draw=blue] at (7, -1.7) {stick};}

\draw[->, red, thick] (5.4,-1.7) to[out=20, in=190] (6.7,-1.9);
\draw[->, red, thick] (-0.4,-1.7) to[out=160, in=-10] (-1.7,-1.9);

\draw[<-, thick]
  (0,-2.4) -- ++(0,-1) -- ++(-2.2,0) node[midway, above=3pt] { $\alpha_{Alice}$} -- ++(0,1.4);

\draw[<-, thick]
  (5,-2.4) -- ++(0,-1) -- ++(2.2,0) node[midway, above=3pt] { $\alpha_{Bob}$} -- ++(0,1.4);

\end{tikzpicture} 

\caption{\footnotesize Two qubits interact via an Ising Hamiltonian $\sigma_z\otimes\sigma_z$.  
Alice and Bob each perform a continuous $z$-measurement and apply local feedback along the $x$-axis.}

\end{center}

\end{figure}
The system evolves according to
{\small\begin{align*}
\mathrm{d}\boldsymbol{\rho}_t
&= -\mathrm{i}[\tfrac{\sigma_z\otimes\sigma_z}{2}, \boldsymbol{\rho}_t]\mathrm{d}t
+ \sum_{l=1}^{2}\Big(\boldsymbol{\sigma}_{z,l}\boldsymbol{\rho}_t\boldsymbol{\sigma}_{z,l}-\boldsymbol{\rho}_t\Big)\mathrm{d}t \\
&\quad + \sum_{l=1}^{2}\Big(\boldsymbol{\sigma}_{z,l}\boldsymbol{\rho}_t+\boldsymbol{\rho}_t\boldsymbol{\sigma}_{z,l}
 -2\mathrm{tr}(\boldsymbol{\sigma}_{z,l}\boldsymbol{\rho}_t)\boldsymbol{\rho}_t\Big)\mathrm{d}W_t^{l},
\end{align*}}
with initial state $\boldsymbol{\rho}_0\in\mathcal{S}(\mathbb{C}^4)$.
The set of equilibrium points reduces to {$\mathfrak{F}\!=\{\rho_{gg},\,\rho_{eg},\,\rho_{ge},\,\rho_{ee}\}$}. 

\subsubsection{Feedback design for quantum 
state stabilization}
We now introduce local feedback controls aimed at stabilizing the joint two-qubit system toward the target state $\rho_{ge}$.  
Each agent applies a local control Hamiltonian along the $x$-axis:
{\small\begin{align*}
\widehat{\mathbf{H}}_{1,t} &= \alpha_{1}(\boldsymbol{\rho}_t)\boldsymbol{\sigma}_{x,1}, \qquad
\widehat{\mathbf{H}}_{2,t} = \alpha_{2}(\boldsymbol{\rho}_t)\boldsymbol{\sigma}_{x,2},
\end{align*}}
where the feedback laws are given by
{\small\begin{align*}
\alpha_{t,A} 
&= -\kappa_{1,A}\mathrm{i}\mathrm{tr}\Big([\boldsymbol{\sigma}_{x,1},\boldsymbol{\rho}](\rho_g\!\otimes\!\mathbf{1})\Big)
+\kappa_{2,A}\left(1-\mathrm{tr}(\boldsymbol{\rho}(\rho_g\!\otimes\!\mathbf{1}))\right),\\
\alpha_{t,B}
&= -\kappa_{1,B}\mathrm{i}\mathrm{tr}\Big([\boldsymbol{\sigma}_{x,2},\boldsymbol{\rho}](\mathbf{1}\otimes\rho_e)\Big)
+\kappa_{2,B}\!\left(1-\mathrm{tr}(\boldsymbol{\rho}(\mathbf{1}\!\otimes\!\rho_e))\right).
\end{align*}}

Under these feedback laws, the closed-loop system converges almost surely to the desired equilibrium state:
{$\boldsymbol{\rho}_t \longrightarrow \rho_{ge}$, as $t\to\infty.$}

\begin{remark}
The interpretation of the control strategy is as follows. The second term in each feedback law acts as a local stabilizer. All the points in $\mathfrak{F}$ are equilibria of the system. Hence, when Alice's qubit (resp. Bob's qubit) is close to an eigenstate that is not the regulation point, $\boldsymbol{\rho}_t$ must be prevented from converging to it. This is achieved by the first term.
In this setup, each agent only needs access to their own \emph{local state} to implement the feedback control.  
Let $\rho_t^{A}$ and $\rho_t^{B}$ denote the reduced density matrices obtained by partial tracing the global state $\boldsymbol{\rho}_t$ over the other subsystem:
{\small\begin{align*}
\rho_t^{A} = \mathrm{tr}_{B}(\boldsymbol{\rho}_t), \qquad 
\rho_t^{B} = \mathrm{tr}_{A}(\boldsymbol{\rho}_t).
\end{align*}}
Alice can therefore compute a feedback control based solely on her local estimate, for instance
{\small\begin{align*}
\alpha_1(t) := \alpha_1(\boldsymbol{\rho}_t^{A}), \qquad 
\boldsymbol{\rho}_t^{A} = \rho_t^{A}\otimes \tfrac{\mathbf{1}}{2},
\end{align*}}
such that we can write the control feedback strategy using only the reduced state: 
{\small \begin{align*}
\alpha_{1}(\rho^A) &= -\kappa_{1,A}\mathrm{i}\mathrm{tr}([\sigma_x,\rho^A]\rho_g) + \kappa_{2,A}\big(1-\mathrm{tr}(\rho^A \rho_g)\big),\\
\alpha_{2}(\rho^B) &= -\kappa_{1,B}\mathrm{i}\mathrm{tr}([\sigma_x,\rho^B]\rho_e) + \kappa_{1,B}\big(1-\mathrm{tr}(\rho^B\rho_e)\big).
\end{align*}}
\end{remark}

We illustrate both quantum state reduction and stabilization starting from $\boldsymbol{\rho}_{0}=\tfrac{1}{4}\mathbf{1}$.  
Figure~\ref{fig:2-QSR} shows a sample trajectory exhibiting state reduction, while Figure~\ref{fig:2-Stabilization} shows stabilization toward the target under the  controls $(\alpha_1,\alpha_2)$.

\begin{figure}[h!]
     \centering
  \includegraphics[width=\linewidth]{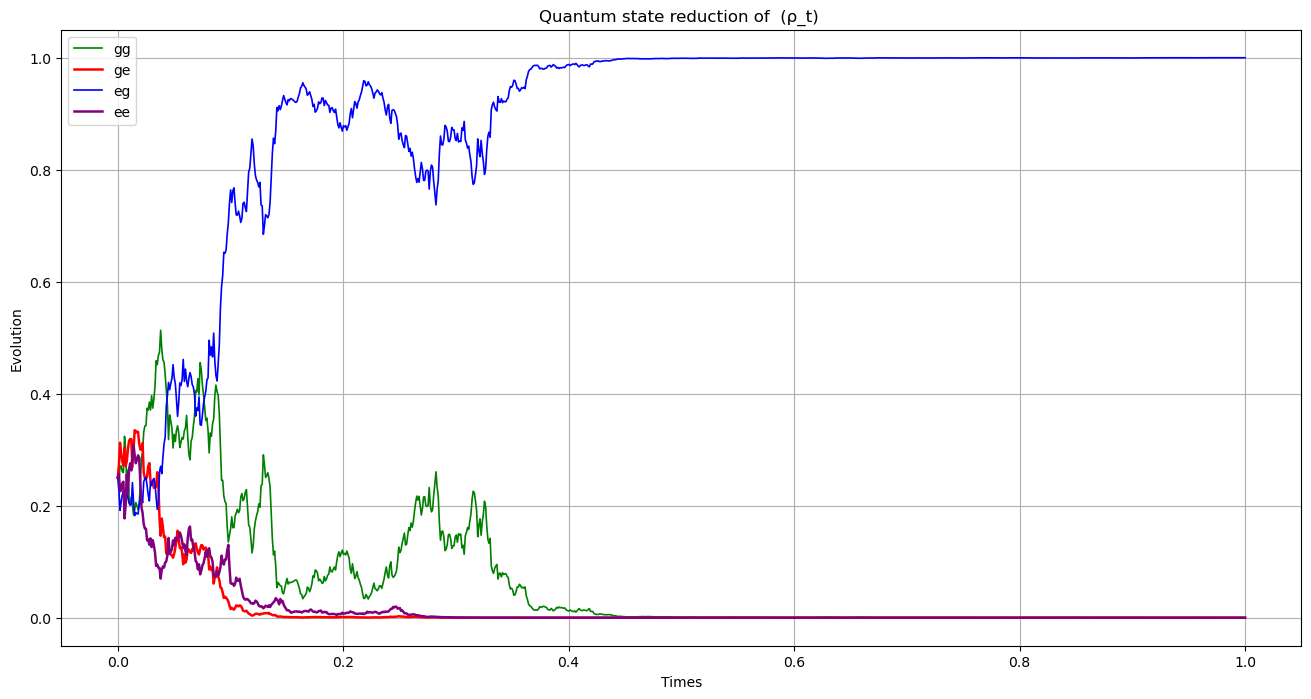}
  \caption{\footnotesize{In one realization, quantum state reduction: convergence of $\boldsymbol{\rho}_{t}$ toward $\textcolor{blue}{\rho_{eg}}$ }}
    \label{fig:2-QSR}
  \end{figure}

  \begin{figure}[h!]
     \centering
  \includegraphics[width=\linewidth]{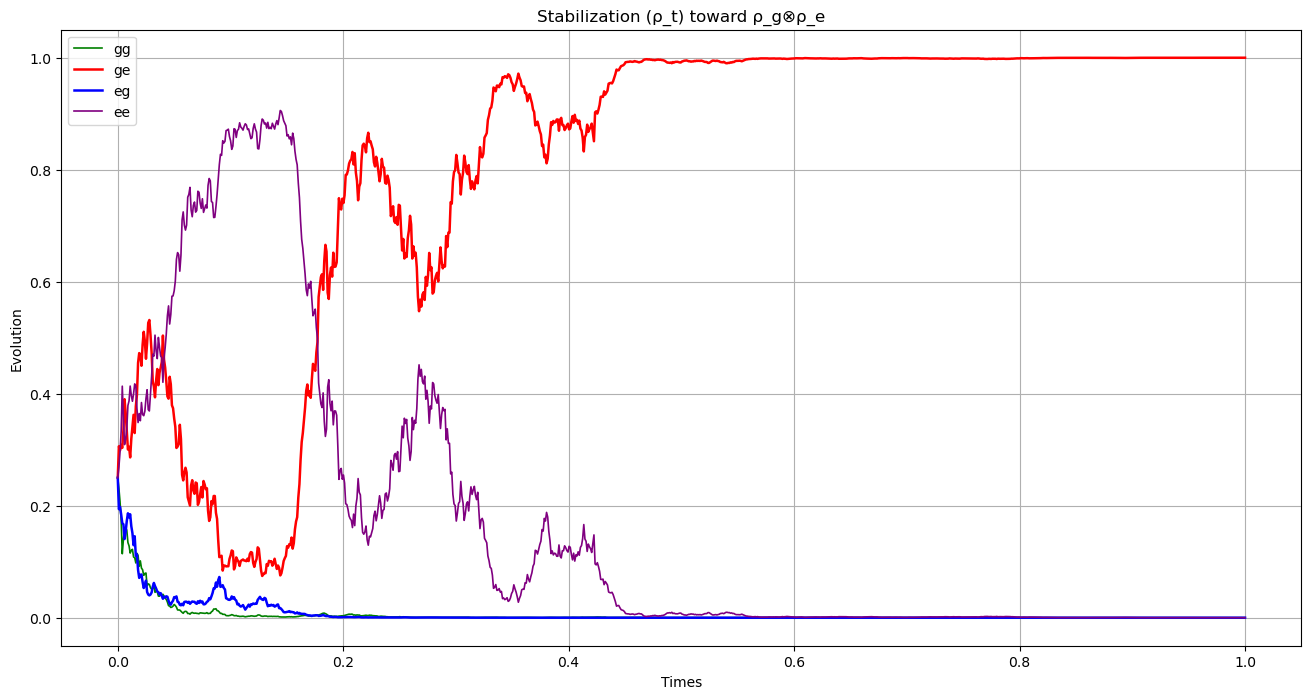}
  \caption{\footnotesize{In one realization, stabilization  of $\boldsymbol{\rho}_{t}$ toward $\textcolor{red}{\rho_{ge}}$}, when each player adopt control strategy $(\alpha_1,\alpha_2)$}
    \label{fig:2-Stabilization}
  \end{figure}

\subsubsection{Game-theoretic formulation}

To formalize the control scenario as a game, we specify each player's objective via a cost functional. Each player $l\in\{1,2\}$ seeks to minimize
{\small\begin{align*}
\mathcal{J}_l(\boldsymbol{\rho},\alpha_{l})
&= \mathbb{E}\left[\int_{0}^{T}\mathrm{C}_l\big(\boldsymbol{\rho}_{t},\alpha_{t,l}\big)\mathrm{d}t + \mathrm{F}_l(\boldsymbol{\rho}_T)\right].
\end{align*}}

Since the goals of the two players do not conflict in this setting, the two-player game is effectively cooperative, in the sense that they may be assigned the same value function to minimize:
{\small
\begin{align*}
\mathcal{J}(\boldsymbol{\rho}, \alpha_{l})
&\!:=\! \mathbb{E}\!\left[\!\int_{0}^{T}\Big(
1\!-\!\mathrm{tr}\big(\boldsymbol{\rho}_{t}\rho_{ge}\big)
+\!\frac{\alpha_{t,l}^{2}}{2}\!
\Big)\mathrm{d}t
+\Big(1\!-\!\mathrm{tr}\big(\boldsymbol{\rho}_{T}\rho_{ge}\big)\Big)\!\right].
\end{align*}
}

In another scenario, if Alice aims for $\rho_{gg}$ while Bob aims for $\rho_{ee}$, the two local controllers bias the collapse in \emph{opposite} directions. In this case the closed loop becomes \emph{bistable}: trajectories tend to concentrate near $\rho_{ge}$, and no control profile can simultaneously minimize both agents’ global costs (Alice prefers $\rho_{gg}$, Bob prefers $\rho_{ee}$). This corresponds to a competitive game, where each player controls only their own qubit and $\rho_{ge}$ emerges as the compromise outcome.

\section{CONCLUSIONS}
\label{sec:conc}
Future work should focus on developing computational methods for differential quantum games and quantum mean-field games. Since numerical implementations necessarily rely on discrete approximations, it would also be useful to formulate a discrete-time framework based on repeated measurement setups.

\addtolength{\textheight}{-12cm}   

\bibliographystyle{plain}
\bibliography{Bibliography}

\end{document}